\newtheorem{lem}{Lemma}
\theoremstyle{remark}
\newtheorem{rem}{Remark}
\newtheorem{definition}{Definition}
\newcommand \listoftodos{\chapter{Todo list} \@starttoc{todocontents}}
 \newcommand\l@todo[2]
\acrodef{RV}{random variable}
\acrodef{i.i.d.}{independent, identically distributed}
\acrodef{PDF}{probability distribution function}
\acrodef{PMF}{probability mass function}
\acrodef{CDF}{cumulative distribution function}
\acrodef{HF}{high frequency}
\acrodef{ch.f.}{characteristic function}
\acrodef{NPV}{net present value}
\acrodef{PV}{present value}
\acrodef{CF}{cash flow}
\acrodef{DCF}{discounted cash flow}
\acrodef{WACC}{weighted average cost of capital}
\acrodef{CDN}{content distribution network}
\acrodef{NC}{network coding}
\acrodef{DC}{data center}
\acrodef{PUE}{power usage effectiveness}
\acrodef{RLC}{random linear coding}
\acrodef{ACPI}{advanced configuration and power interface}
\acrodef{RLNC}{random linear network coding}
\acrodef{SAN}{storage area network}
\acrodef{NAS}{network attached storage}
\acrodef{NCS}{network coded storage}
\acrodef{UCS}{uncoded storage}
\acrodef{AWS}{Amazon Web Services}
\acrodef{HD}{high definition}
\acrodef{HLS}{HTTP Live Streaming}
\acrodef{NCC}{no coefficient-cycling}
\acrodef{CC}{coefficient-cycling}
\acrodef{HDD}{hard disk drive}
\acrodef{SSD}{solid state drive}
\acrodef{IDNC}{instantly decodable network coding}
\acrodef{SFM}{state feedback matrix}
\acrodef{DSM}{drive state matrix}
\acrodef{d.o.f.}{degree of freedom}
\acrodef{PMP}{point-to-multipoint}
\acrodef{MCMC}{Markov-Chain-Monte-Carlo}
\author{Ulric J.~Ferner$^{\dag}$, Parastoo Sadeghi$^{\ddag}$, Neda Aboutorab$^{\ddag}$, and Muriel M\'edard$^{\dag}$\\ $^{\dag}$Research Laboratory for Electronics,  Massachusetts Institute of Technology, Cambridge,
    MA 02139, USA \\$^{\ddag}$Research School of Engineering, Australian National University, Canberra ACT 0200, Australia}
\title{Scheduling Advantages of Network Coded Storage in Point-to-Multipoint Networks}
\date{\today}
\begin{document}
\maketitle
\begin{abstract}
We consider scheduling strategies for \ac{PMP} \acp{SAN} that use \ac{NCS}.  In particular, we present a
simple \ac{SAN} system model, two server scheduling algorithms for \ac{PMP} networks, and analytical
expressions for internal and external blocking probability.  We point to select
scheduling advantages in \ac{NCS} systems under
\emph{normal} operating conditions, where content requests can be temporarily denied owing to
finite system capacity from drive I/O
access or storage redundancy limitations.  
 \ac{NCS} can lead to improvements in throughput and blocking probability due to
  increased immediate scheduling options, and complements other well documented \ac{NCS} advantages
  such as regeneration,
  and can be used as a guide for future storage system design.  
\end{abstract}
\section{Introduction}
\label{sec:intro}

The prolific growth of online content and streaming video makes serving content
requests to multiple users simultaneously an important technique for modern
storage area networks (SANs).  Two fundamental measures of service quality are system external blocking probability, i.e.,
the probability that a requesting  user is denied immediate access to content, as well as system throughput.
Under \emph{normal} operating conditions and given perfect scheduling, \acf{NCS} has been identified as a
promising  technique to reduce
blocking probability.  For instance,\cite{FerMedSol:12} used queuing theory
to show that network coding can reduce system blocking probability.  In
this paper we build upon this idea and develop simple and intuitive server scheduling algorithms
for such  \ac{NCS} systems.   We then
explore their impact on both throughput as well as blocking  probability.  The main contributions of this paper are:
\begin{itemize}
\item We introduce a simple storage model for \acf{PMP} storage networks that allows direct evaluation of blocking probability and system throughput;
\item  Using this model, we propose two intuitive scheduling algorithms---one for \acf{UCS} and one
for \ac{NCS}---that can achieve maximal throughput;
\item We quantify the blocking probability and throughput savings of \ac{NCS} over \ac{UCS}
scheduling, showing that a small improvement in throughput translates to a comparatively
large improvement in blocking probability.  
\end{itemize}

This paper builds upon and complements existing work in this area.  The use of  \ac{NCS} as regenerating codes
is a well studied repair technique to enhance \ac{SAN} reliability
\cite{DimRamWuSuh:11} in both centralized and distributed systems.  This particularly holds in \emph{less common} operating
conditions, such as permanent  drive failures.  In modern systems \emph{traffic-induced} temporary unavailability significantly dominates
disk failures \cite{ForLabPopStoTruBarGriQui:10}, and so like in \cite{FerMedSol:12}, this paper focusses on \emph{normal} operating
conditions and seeks to avoid highly transient and temporary bottlenecks in data liveness.  General scheduling for coded storage in point-to-point networks, when users are served sequentially instead
of simultaneously, are considered in \cite{ShaLeeRam:12,HuaPawHasRam:12}.  

Server scheduling is also well studied in matched networks such as cross-bar switches.  Throughput-optimal
schedules are considered for $ N\times M$ point-to-point cross-bar switches using graph theory and
techniques such as the Birkhoff-von Neumann theorem \cite{AndOwiSaxTha:93}.  
Switches with multicast and broadcast capabilities with
a queueing analysis flavor are considered in \cite{MarBiaGiaLeoNer:03}.  References \cite{PraMcKAhu:97,YuReuBer:11} attempt to map the
multicast problem in cross-bar switches to simpler problems such as block-packing
games and round-robin based multicast. By characterizing flow conflict graphs and their corresponding stable set polytopes in multicast cross-bar switches, \cite{KimSunMedEryKot:11} proposed online and offline network coding schedules for enhancing throughput. For general PMP storage networks, developing appropriate storage models, corresponding conflict graphs, and throughput optimal scheduling is an interesting and largely unaddressed area of research. This paper takes a first step towards this by considering a particular kind of \ac{PMP} network, namely broadcast, and by developing intuitive coded and uncoded leader-based scheduling, which do not explicitly require conflict graph construction. Chunk scheduling problems in uncoded peer-to-peer
networks, as opposed to \acp{PMP}, are considered in \cite{FenLi:11}, and for star-based broadcast
networks in \cite{RouSiv:97}. Note also that unlike classical asynchronous broadcast problems \cite{AskFraZdo:01,Hu:01}, our goal is not to reduce content delivery delay or
to optimize caching.  Instead, by taking into account  intermittent drive availability, we aim to determine the impact
of  scheduling drive reads and the impact of content storage format on blocking probability and throughout.  We expect that by using appropriate caching, system
performance can be further improved.  However, this is beyond the scope of this work.  

The remainder of this paper is organized as follows.  Section \ref{sec:model} details our system
model.  Section \ref{sec:schemes} presents service schemes and Section \ref{sec:numResults}
describes numerical results.  Section \ref{sec:conc} concludes the paper.



\section{System Model}
\label{sec:model}
Fig.~\ref{fig:systemModel} depicts our tree-structured
connectivity model made of single server $S$,
connected to $R$ drives, that receives user requests for content. 

\begin{figure}[tb]
  \centering
  \includegraphics[width=\linewidth]{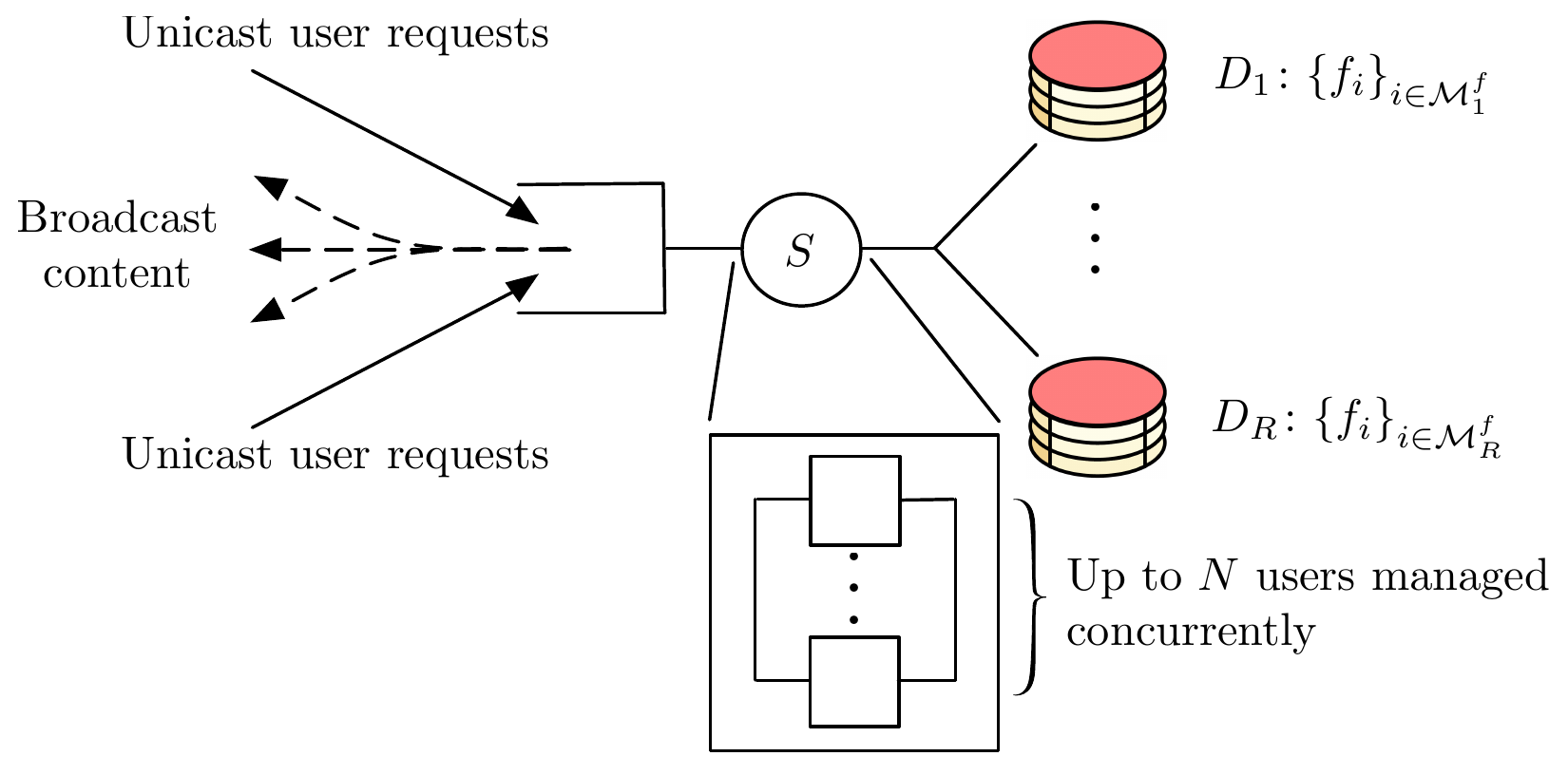}
  \caption{System model.}
  \label{fig:systemModel}\vspace{-4mm}
\end{figure}

\vspace{-0.5cm}
\subsection{Drives}
\label{sec:drives}

The \ac{SAN} in Fig.~\ref{fig:systemModel} stores a single chunked file $ \mathcal F = \{f_1,f_2,\cdots,f_T\} $, where $T$
is the number of chunks in $ \mathcal F$, and $ \mathcal F$ is stored across a set of drives
$\mathcal{D} = \{D_1, \cdots, D_R\}$.  If drive $ D_{i}$ receives a read request for chunk $ f_{j}
$, and if $D_{i}$ stores $ f_{j}$ and $ D_i$ is available, then it takes one timeslot to read out that content and broadcast to all users.   We model the overall effect of drives having
finite I/O access bandwidth with parameter $
P_{b}^{D}$, where $ P_{b}^{D}$ is the probability that any drive is blocked in timeslot $t$.  For
simplicity, we assume drives  are blocked
independently of one another and across timeslots.\footnote{This blocking model can be applicable where other servers have access to the same drives and therefore, there is some probabilistic traffic-induced blocking observed by $S$. More realistic models for traffic-induced drive blocking as well as more general PMP traffic patterns are beyond the scope of this paper and subject of our current research. }

In \ac{UCS}, let $\mathcal M^d_{i} \subseteq \mathcal {D}$ be the collection of drives that hold
uncoded file chunk $f_i$ and conversely, let $\mathcal M^f_i \subset \mathcal F$ be the collection
of file chunks held by drive $D_i$.  The only requirement of chunks to drives is that $R$ drives collectively
hold at least one copy of the whole file, i.e., $\mathcal F = \cup_{i = 1}^{R} \mathcal M^f_i$.

In \ac{NCS},  the $r$th coded file chunk is represented as \cite{FerMedSol:12}
\begin{equation}
  c_{r} = \sum_{j = 1}^{T} \alpha_{j,r} f_j
\end{equation}
where $\alpha_{j,r}$ is the encoding coefficient of file chunk $f_j$ and the corresponding encoding vector is
\begin{equation}
  \mathbf{k}_{r} = \sum_{j = 1}^{T} \alpha_{j,r} \mathbf{e}_j\label{eq:encoding:vec}
\end{equation}
In \eqref{eq:encoding:vec}, $\mathbf{e}_j = [e_{j,1}, \cdots, e_{j,T}]$ is the unit encoding row vector of length $T$
with elements $e_{j,r} = \delta_{j,r}$.  Function $\delta_{j,r}$ is the Kronecker delta function
with $\delta_{j,r} = 1$ iff $j=r$.  We assume that a total of $H$ linearly coded chunks $c_1$ to
$c_H$ are stored onto drives via some MDS code, such that any $T$ coded chunks are
linearly independent so that the original file chunks can be recovered from them using Gaussian
elimination.  If encoding coefficients $\alpha_{j,r}$ are randomly selected from a finite field
$\mathbb F_q$ with sufficiently large size $q$, this requirement is satisfied with high
probability \cite{HoMedKoeEffShiKar:06}.
 
\vspace{-0.1cm}
\subsection{Server}
\label{sec:servers}

We assume server $S$  has a bounded buffer to manage concurrent user requests.  Let $N$ be the
maximum number of users that can be managed and serviced concurrently and suppose $S$  operates in
slotted time.  In particular, in any timeslot, $S$ can serve at most $N$ active requests for content.   A user request for a content is cleared
from the buffer when all its requested file chunks have been transmitted by $S$.  

Any additional request beyond $N$ for the same content will be \emph{externally blocked}.\footnote{$N$ is an
  arbitrary, possibly time varying, quantity and hence this model does not limit our analysis. }
We will discuss the relation between external and \emph{internal} blocking in Section \ref{sec:metrics}. This is a similar model to existing drive blocking models \cite{FerMedSol:12} and existing practical
server experimentation test \cite{FerLonPedVolMed:13}.  When a user request arrives
and is not externally blocked, one slot of
the server buffer is allocated to manage and service this user request.   We make the following
additional assumptions about how $S$ retrieves content from $ \mathcal D$:
\begin{itemize}
    \item Let the vector $\bm{b}(t)$ of size $R$ be the drive availability vector, where $b_i(t) = 0$
means drive $D_i$ is free for reads and $b_i(t) = 1$ means it is busy in timeslot $t$. We assume that $\bm{b}(t)$ can be obtained by the server at the beginning of timeslot $t$ with negligible time overhead.
      \item In timeslot $t$, based on $ \mathbf{b}(t)$,  $S$ can choose to send a read request to
  access a single drive and read a single chunk.
    \item At the end of timeslot $t$, $S$  broadcasts the received chunk $ x(t)$ to users active in the
buffer.  
\end{itemize}
We assume perfect communication so when $S$ broadcasts content all active users receive that
content without error.  

\vspace{-0.1cm}
\subsection{Users}
\label{sec:users}
 We model users with the following key parameters:
\begin{itemize}
      \item User requests arrives at $S$ following a Poisson process with rate $ \lambda$.
      \item All user requests are for the entire file $ \mathcal F$, so in the long-term there is
  uniform traffic demand across file chunks. 
      \item Users currently being managed and serviced by $S$ are referred to as \textit{active
    users}, which we denote by $\mathcal U_{A}$, which is a subset of all serviceable users $\mathcal U = \{u_m\}$. 
\end{itemize}
Each user $u_m$ stores the received encoding vectors up to timeslot $t$ in a buffer (matrix)
denoted by $\mathbf{K}_m(t)$. This is called the knowledge space of user $u_m$ at timeslot $t$.  The rank of knowledge space of user $u_m$ at timeslot $t$  is denoted by $r_m(t)= \mathtt {rank}(\mathbf{K}_m(t))$.

A user is said to \emph{receive a new \ac{d.o.f.}} if the rank of its knowledge space increases by
one after reception of a chunk from $S$, that is, if
 $ r_m(t+1)= r_m(t)+1 \, .$
A file chunk $f_j$ is said to \emph{decoded} by user $u_m$ if the user can obtain the corresponding
unit encoding vector $\mathbf{e}_j$ (possibly after Gaussian elimination) from its knowledge space
$\mathbf{K}_m(t)$.

An active user $u_m$ at timeslot $t$ is a user whose \ac{d.o.f.} satisfies $r_m(t) < T$.  
User $u_m$ is said to \emph{depart} the queue at time $t$ when the rank of its knowledge space becomes $T$.  
Throughout the rest of the paper, a user always refers to an active user who has not yet departed from the server's
buffer.

References \cite{DimRamWuSuh:11, FerMedSol:12} have assumed perfect scheduling by
the server, which is not assumed in our model.  Somewhat related to this issue is
the assumption  that the coefficients of a coded chunk are cycled or refreshed to ensure
innovative  chunks for every drive read.  Finally, to be able to apply queuing theoretical arguments
in  \cite{FerMedSol:12}, requests for different file chunks of the same content arrive randomly  and
independently  of other chunks at the server.  In that paper, the notion of users is abstracted
away, which we do not do here.

\vspace{-0.1cm}
\subsection{Performance Metrics}
\label{sec:metrics}
\vspace{-0.1cm}Let $\mathcal U_n(t) \subset \mathcal U_A$ be the subset of \emph{targeted} users who receive an innovative \ac{d.o.f.} from the
broadcast of chunk $x(t)$ at timeslot $t$.  We define three throughput metrics in order of
strongest to weakest, which are equivalent to those used in cross-bar switch scheduling \cite{ChaCheHua:99}.  
\begin{definition}\label{def:optimal} \textit{(Throughput optimal)}
  A scheduling service is \emph{throughput optimal} if every service can guarantee $\mathcal U_n(t) = \mathcal U_A$. That is,
   $ r_m(t+1)= r_m(t)+1,  \forall u_{m} \in \mathcal{U}_A,  \forall t$. 
\end{definition}
Since system constraints may mean that throughput optimality is not feasible, we consider maximum and maximal throughput, which are in general
the best any scheduling scheme can do \emph{up to} or \emph{at} any timeslot based on constraints such as drive availability.
\begin{definition} \textit{(Maximum throughput)}
  A service scheme  achieves \textit{maximum throughput} if the total number of targeted users up to time $t$,
  denoted by $ \sum_{i=1}^{t} \vert \mathcal U_{n}(t) \vert$ is maximized, across all service
  schemes for a given data storage allocation.
\end{definition}
\begin{definition} \textit{(Maximal throughput)}
  A scheduling service achieves \textit{maximal throughput} if at each timeslot $t$, the
  number of targeted users $|\mathcal U_n(t)|$ is maximized, across all service
  schemes for a given data storage allocation. 
\end{definition}
Note that any service scheme that achieves maximal throughput is necessarily a greedy algorithm.  In
a given timeslot, active users that are not targeted by a scheduling scheme
are said to
be \emph{internally blocked}.  These users are not externally blocked as they are already in the
server's buffer, but are held up for service. The better the throughput of a scheduling
scheme, the lower its internal blocking probability will be.  Intuitively, a lower internal blocking
probability should lead to lower external  blocking probability as active users are flushed out of the
system faster.


\vspace{-0.15cm}

\section{Data Scheduling Schemes}
\label{sec:schemes}
We introduce the concept of a service \textit{leader} and considers two system types.
First, to develop intuition for our problem and to verify expectations, we consider systems in which drives never block, i.e., drives with infinite I/O access
bandwidth.  Second, we consider systems with  traffic-induced drive blocking, i.e., drives with finite
I/O access bandwidth.  In both systems, we propose service schemes for \ac{UCS} and \ac{NCS}.  Schemes presented in
this section can be formulated as  integer linear programs over content demand graphs, similar to those for cross-bar  switches
\cite{ChaCheHua:99,SunDebMed:07} and are omitted here.


\vspace{-0.2cm}
\subsection{Infinite I/O access bandwidth systems}
\label{sec:singleRead}

To verify expectations, consider a system in which drives have infinite I/O access bandwidth, so $ P_{b}^{D}=0$.  

\subsubsection{Uncoded Scheme}
\label{sec:schemeInfBW}

Consider \ac{UCS} and the scheme outlined in Algorithm \ref{alg:InfLeaderBasedScheme}.  
We introduce the following terminology for our leader-based scheme, which will also be used in
the finite I/O access bandwidth case. Let $\mathbf a_m(t)$ be a binary valued \emph{decoded chunk vector} of
length $T$ for user $u_m$ with elements $a_{m,j}(t)$.   If $a_{m,j}(t) = 0$  then user $u_m$ has
decoded file chunk $f_j$ and if $a_{m,j}(t) = 1$ then file chunk $f_j$ is yet to be decoded.
Upon arrival of user $u_m$'s file request, $a_{m,j}(t) = 1$ for all $1 \le j \le T$ and
upon departure $a_{m,j}(t') = 0$ for all $1 \le j \le T$.
\begin{itemize}
\item The \emph{leader user} $u_\ell$ at timeslot $t$ is the user with maximum knowledge space rank. That is,
\begin{align}
\label{eq:lmax}
  \ell = \mathrm{argmax}_{m:u_m\in \mathcal{U}_A} r_m(t) \, .
\end{align}
\item The \emph{earliest undecoded chunk} or simply \emph{min chunk} of user $u_m$ is the chunk for
which  $a_{m,j}(t) = 1$ and all $a_{m,k}(t) = 0$ for $k < j$. 
\item The \emph{earliest undecoded chunk of the leader} or simply \emph{min-max chunk} $f_{j^*}$ is
the chunk  for which $a_{\ell,j^*}(t) = 1$ and $a_{\ell,k}(t) = 0$ for $k < j^*$ for the leader user $u_\ell$.
\end{itemize}

See Fig.~\ref{fig:leaderSchemeEg} for an
example of the leader-based scheme in Algorithm \ref{alg:InfLeaderBasedScheme}.

\begin{algorithm}
\caption{Leader-based scheduling scheme for \ac{UCS} with infinite I/O access bandwidth.}
\label{alg:InfLeaderBasedScheme}
  \begin{algorithmic}[1]
\FOR{timeslot $t$}
\STATE Find the leading user $u_\ell$ among all active users in $ \mathcal U_A$, which has the
highest knowledge  space rank $r_\ell(t)$.
\STATE Find the leader's earliest undecoded chunk denoted by $f_{j^*}$.
\STATE Read $f_{j^*}$ from a drive in $\mathcal M^d_{j^*}$ and broadcast $x(t) = f_{j^*}$ to all active users.
\STATE All active users get to decode $f_{j^*}$ and the server updates  the decoded chunk list of
all active users.  That is, $a_{m,j^*}(t) = 0$ for all active users $u_m \in \mathcal U_A$.
\ENDFOR
  \end{algorithmic}
\end{algorithm}

\begin{figure}[tb]
  \centering
  \includegraphics[width=0.95\linewidth]{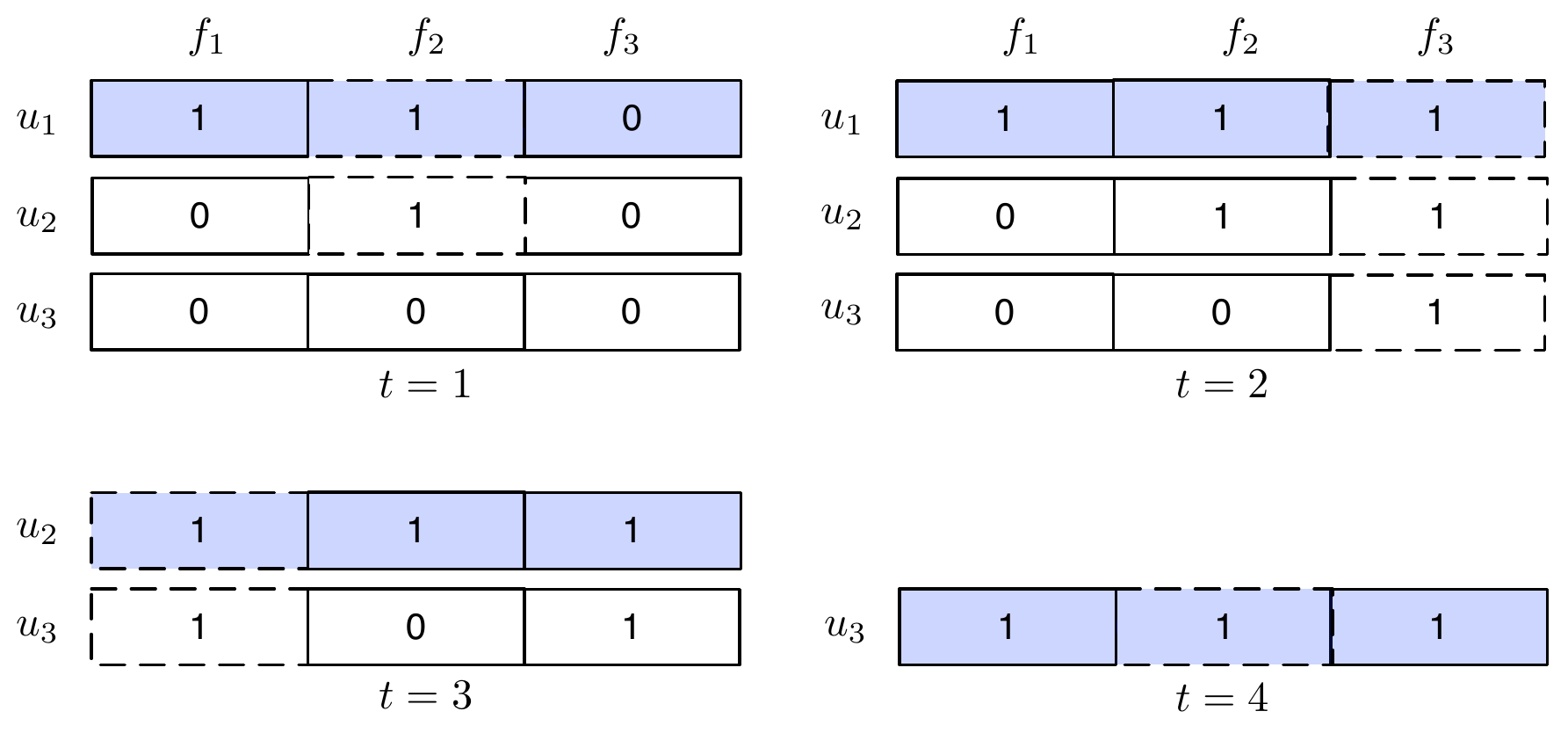}
  \caption{  Consider a system with $ T=3$, and the following example sequence of Algorithm \ref{alg:InfLeaderBasedScheme} showing the evolution of
    users' decoded chunk vector.  Users $u_1$ and $ u_{2}$ are already in the system with different
    demands when user $u_3$ arrives at $t=1$.
    During each timeslot $t$, $S$ transmits the earliest undecoded chunk of the shaded leader.}
  \label{fig:leaderSchemeEg}
\end{figure}

\subsubsection{Uncoded Scheme Analysis}
\label{sec:analysisInfBW}
Intuitively a system with infinite I/O access bandwidth and perfect communications will allow for
throughput optimal scheduling by $S$, since the scheme is without errors.  We now formalize that the leader-based scheme in Algorithm \ref{alg:InfLeaderBasedScheme} is
throughput optimal according to Definition \ref{def:optimal}.  

\begin{lem}
  The scheduling scheme of Algorithm \ref{alg:InfLeaderBasedScheme} is throughput optimal.
\end{lem}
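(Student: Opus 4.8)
The plan is to reduce throughput optimality to a single structural invariant about the decoded-chunk sets of the active users. First I would record the elementary fact that, in \ac{UCS}, every stored and broadcast chunk is a unit vector $\mathbf{e}_j$, so the knowledge space $\mathbf{K}_m(t)$ is simply the span of the unit vectors $u_m$ has received; consequently $r_m(t)$ equals the number of chunks $u_m$ has decoded, and $u_m$ has decoded $f_j$ iff it has received $f_j$. I then identify the ``state'' of an active user with its decoded-chunk set $\mathcal{A}_m(t) = \set{f_j : a_{m,j}(t)=0}$, so that $r_m(t) = |\mathcal{A}_m(t)|$ and a fresh arrival has $\mathcal{A}_m = \emptyset$.

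The key step is to prove, by induction on $t$, the invariant that \emph{the sets $\set{\mathcal{A}_m(t) : u_m \in \mathcal{U}_A}$ are totally ordered by inclusion}, i.e.\ they form a chain. The base case (empty buffer, or a single freshly arrived user) is trivial. For the inductive step I would track the three kinds of event that occur within a timeslot. (i) A new user joins with $\mathcal{A}_m = \emptyset$; since $\emptyset$ is comparable with every set, the chain property is preserved. (ii) The server executes Algorithm~\ref{alg:InfLeaderBasedScheme}: the leader $u_\ell$ has maximal rank, hence by the chain hypothesis $\mathcal{A}_m(t) \subseteq \mathcal{A}_\ell(t)$ for every active $u_m$; the min-max chunk $f_{j^*}$ satisfies $f_{j^*} \notin \mathcal{A}_\ell(t)$, so $f_{j^*} \notin \mathcal{A}_m(t)$ for all active $u_m$, and after the broadcast every set becomes $\mathcal{A}_m(t) \cup \set{f_{j^*}}$; adjoining a common element to each member of a chain yields a chain. (iii) A user whose rank reaches $T$ (equivalently $\mathcal{A}_m = \mathcal{F}$, necessarily the top of the chain) departs; deleting the maximum of a chain leaves a chain. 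This closes the induction.

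With the invariant established the lemma is immediate. Fix any timeslot $t$ with $\mathcal{U}_A \neq \emptyset$. The leader $u_\ell$ satisfies $r_\ell(t) < T$, so it has an undecoded chunk and $f_{j^*}$ is well defined; by the chain property $\mathcal{A}_m(t) \subseteq \mathcal{A}_\ell(t)$ and $f_{j^*} \notin \mathcal{A}_\ell(t)$, so $f_{j^*}$ is undecoded by \emph{every} active user. Hence broadcasting $x(t) = f_{j^*}$ raises the rank of each active user by exactly one, i.e.\ $r_m(t+1) = r_m(t) + 1$ for all $u_m \in \mathcal{U}_A$ and all $t$, which is exactly Definition~\ref{def:optimal}.

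I expect the main obstacle to be pinning down the correct invariant rather than verifying it. The natural first guess---that each decoded set is a prefix $\set{f_1,\dots,f_k}$---is false: once a leader departs and a lagging user becomes the new leader, the scheme fills in chunks in a permuted order, so decoded sets need not be prefixes. The weaker ``totally ordered by inclusion'' property is what actually survives, and the only nontrivial point in the induction is the broadcast step~(ii), which hinges on the observation that the leader's earliest undecoded chunk is undecoded by everyone because everyone's decoded set sits below the leader's in the chain.
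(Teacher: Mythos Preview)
Your proof is correct. The paper's own argument is also by induction but is organized differently: it inducts on the number of users in the buffer rather than on time, arguing that when a new user with empty knowledge joins a system that has so far been throughput optimal, the leader is unchanged and the new user trivially receives a degree of freedom from whatever chunk is broadcast. The paper leaves largely implicit the fact that the \emph{existing} users also each receive a degree of freedom at every step, which is precisely what your chain invariant makes explicit. Your formulation is more careful: by tracking the invariant that the decoded sets $\mathcal{A}_m(t)$ are totally ordered by inclusion, you handle arrivals, broadcasts, and departures uniformly, and you correctly note that the weaker ``chain'' property---not the ``prefix'' property---is what survives once the leader changes after a departure. The paper's brevity buys a two-paragraph proof; your explicit invariant buys a cleaner induction that would transfer more readily to variants of the scheme.
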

\begin{proof}
We prove optimality by induction.

Base step: Consider an empty server queue.  When the first user arrives, it immediately become
the leader and the system services uncoded chunks sequentially starting from file chunk $ f_{1}$.
Therefore, in each timeslot this user will successfully receive a \ac{d.o.f.} so the scheme
is throughput optimal during this time.  

Inductive step:  Consider a throughput optimal scheme with a server queue comprising $m$ users in
$\mathcal{U}_A$  where all
users have received a \ac{d.o.f.} in all previous timeslots. The $(m+1)$th user arrives. Since
the knowledge space rank of the new $(m+1)$th user is zero, the leader remains unchanged.  Choose
uncoded chunk $x(t) = f_{j^*}$ corresponding to the leader as per Algorithm \ref{alg:InfLeaderBasedScheme}.  Then the
new user will also receive a \ac{d.o.f.} as it has received no chunks so far.   So all users
continue to receive a \ac{d.o.f.} in every timeslot and the scheme remains throughput optimal.  
\end{proof}

\begin{lem}
\label{lem:throughputToBlocking}
If a scheduling scheme is throughput optimal, then it also
minimizes the blocking probability across all feasible scheduling schemes.
\end{lem}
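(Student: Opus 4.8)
The plan is to handle internal and external blocking separately and show that a throughput-optimal scheme drives each to its minimum. The internal part is immediate: by Definition~\ref{def:optimal} a throughput-optimal scheme has $\mathcal{U}_n(t)=\mathcal{U}_A$ in every timeslot, so no active user is ever internally blocked and its internal blocking probability is identically zero. It therefore remains to compare external blocking. Since a request is externally blocked exactly when it arrives to find the server buffer already holding its full capacity of active users, minimizing external blocking probability is equivalent to maximizing the long-run fraction of arriving requests that get admitted to the buffer, and this I would establish by a sample-path argument.

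The structural fact I would rely on is a bound on how long an admitted request must occupy a buffer slot. Each user enters with knowledge-space rank $0$ and departs only when its rank reaches $T$, while any feasible scheme broadcasts a single chunk per timeslot and so raises each buffered user's rank by at most one per timeslot. Hence under \emph{every} feasible scheme each admitted request resides in the buffer for at least $T$ timeslots, whereas under a throughput-optimal scheme it resides there for \emph{exactly} $T$ timeslots, because such a scheme delivers a new \ac{d.o.f.} to every buffered user in every timeslot. In short, a throughput-optimal scheme flushes each admitted request in the fewest timeslots physically possible.

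I would then fix an arbitrary realization of the Poisson arrival stream and run both a throughput-optimal scheme and an arbitrary feasible scheme $\pi$ on that same realization. Using the ``at least $T$ versus exactly $T$'' fact, one shows by induction on $t$ that the cumulative number of requests admitted by the throughput-optimal scheme through timeslot $t$ is at least the cumulative number admitted by $\pi$ --- equivalently, the cumulative number of externally blocked requests is no larger under the throughput-optimal scheme. In the inductive step one compares admissions in slot $t$: the throughput-optimal scheme releases a freed buffer slot for \emph{every} user that has been present for $T$ slots, while $\pi$ can only guarantee that no user present for fewer than $T$ slots has left, so the throughput-optimal scheme has freed up as many buffer slots by time $t$ as any scheme can, which (given the inductive hypothesis on cumulative admissions) keeps its running admission count at or ahead of $\pi$'s. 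Averaging the resulting sample-path inequality over the arrival process and passing to the long-run fraction gives that the external blocking probability under the throughput-optimal scheme is no larger than under $\pi$; combined with the vanishing internal blocking, this proves the claim.

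The main obstacle is the induction in the third step. The tempting shortcut --- arguing that the \emph{instantaneous} buffer occupancy is pointwise smaller under the throughput-optimal scheme --- is in fact false, since a sluggish scheme that admitted fewer requests earlier can momentarily have an emptier buffer; so the induction must be carried on the \emph{cumulative} admission (equivalently, blocking) count rather than on occupancy, and it must exploit both the tight ``exactly $T$ slots'' sojourn time of the throughput-optimal scheme and the fact that no feasible scheme can have released a request admitted fewer than $T$ slots ago. Dealing with the start-up transient (empty queue, first $T$ slots) and with multiple simultaneous arrivals is routine book-keeping once the cumulative-count induction is set up correctly.
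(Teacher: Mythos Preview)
Your proof is correct and the cumulative-admission induction you outline does go through (the key inequality in the full-buffer case being $A^\pi(t)\le N+D^\pi(t^-)\le N+A^\pi(t-T)\le N+A^*(t-T)=A^*(t)$, using the inductive hypothesis at time $t-T$), so the obstacle you flag is real but surmountable exactly as you describe.

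However, your route is genuinely different from the paper's. The paper argues via queueing theory: since every admitted user is cleared in exactly $T$ timeslots under a throughput-optimal scheme and in at least $T$ under any other, the per-user service rate $\mu=1/T$ is maximized; it then invokes the monotonicity of the Erlang-B loss formula in $\mu$ (for fixed $\lambda$ and buffer size $N$) to conclude that external blocking is minimized. You instead bypass any queueing formula and establish a sample-path dominance of cumulative admissions directly. Your approach is more elementary and more robust --- it does not require the competing scheme to fit an $M/G/N/N$ template with a well-defined scalar service rate, and it treats internal blocking explicitly --- whereas the paper's argument is much shorter and dovetails neatly with the Erlang-B expression derived in the lemma that follows.
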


\begin{proof}
  A  throughput optimal scheme means that all users in $ \mathcal U_A$ receive an innovative
  \ac{d.o.f.} in each timeslot.  This means all users are serviced in $T$ timeslots after their arrival,  which is the
  minimum number possible because only one chunk can be broadcast per timeslot.  Hence, the service
  rate $ \mu = \vert \mathcal U \vert /T$ is at the maximum for a throughput optimal scheme.  
Given a fixed arrival rate $ \lambda$ and a fixed buffer size $ N$, the Erlang B blocking formula
monotonically decreases with increasing service rate $ \mu$.  Hence the
maximum service rate results in minimum blocking probability.  
\end{proof}

Applying Lemma \ref{lem:throughputToBlocking} to Algorithm \ref{alg:InfLeaderBasedScheme} shows that  it is also blocking probability optimal.

\begin{lem}
The blocking probability of a throughput optimal scheme is given by
\begin{align}
\label{eq:server_Pb_infiniteIO}
  P_{b}^{s} = \frac{ (\lambda T)^{N}/N! }{ \sum_{i=0}^{N} (\lambda T)^{i} / i! }  \, .
\end{align}
\end{lem}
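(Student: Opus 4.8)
The plan is to recognize the server buffer, operated under a throughput-optimal schedule, as a classical Erlang loss system and then read off the Erlang B formula, which is exactly the right-hand side of \eqref{eq:server_Pb_infiniteIO}. The key structural observation is that throughput optimality makes each admitted user's holding time \emph{deterministic and independent of the system state}: if every active user gains one \ac{d.o.f.} per timeslot (Definition \ref{def:optimal}), then a user admitted at timeslot $\tau$ has $r_m(\tau+T)=T$ and departs at timeslot $\tau+T$, regardless of how many other users share the buffer or when they arrived. This is the same ``served in $T$ timeslots'' fact already used in the proof of Lemma \ref{lem:throughputToBlocking}, but now I use it to say that each of the $N$ buffer slots is occupied for exactly $T$ timeslots per admitted user, and a request arriving when all $N$ slots are occupied is externally blocked and lost.

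With that in hand, the first step is to cast the model as an $M/D/N/N$ loss system: Poisson arrivals at rate $\lambda$, $N$ ``servers'' (the buffer slots), deterministic holding time $T$, and no waiting room. The second step is to invoke the insensitivity property of the Erlang loss system --- the stationary distribution of the number of busy servers, and hence the time-stationary fraction of time all servers are busy, depends on the holding-time distribution only through its mean. Since the mean holding time is $T$, the offered load is $a=\lambda T$, and the stationary probability that all $N$ servers are busy is the Erlang B value $B(N,\lambda T)=\frac{(\lambda T)^{N}/N!}{\sum_{i=0}^{N}(\lambda T)^{i}/i!}$. The third step is to identify this quantity with the external blocking probability: by the PASTA property, Poisson arrivals see time averages, so an incoming request finds all $N$ slots busy with exactly this probability, which is $P_{b}^{s}$, giving \eqref{eq:server_Pb_infiniteIO}.

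If a self-contained derivation is preferred over citing insensitivity, the alternative is to first analyze the $M/M/N/N$ version, replacing the deterministic holding time by an exponential one with the same mean $1/\mu=T$; the number of active users is then a finite birth--death chain with birth rate $\lambda$ in every state and death rate $k\mu$ in state $k$ (the $k$ active users departing independently at rate $\mu$), whose stationary distribution is $\pi_k=\frac{(\lambda/\mu)^k/k!}{\sum_{j=0}^{N}(\lambda/\mu)^j/j!}$; substituting $\lambda/\mu=\lambda T$ and reading off $\pi_N$ (again via PASTA) yields \eqref{eq:server_Pb_infiniteIO}, after which insensitivity is invoked only to argue the same expression survives when the exponential holding time is replaced by the true deterministic one. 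Either way, the one point that genuinely needs care --- and the main obstacle --- is this appeal to insensitivity, i.e., justifying that the continuous-time Erlang analysis (and a state-independent exponential abstraction of the holding time) is legitimate for the slotted-service model. Here one must be explicit that throughput optimality is precisely what decouples a user's holding time from the scheduling history, so the loss-system reduction is exact and the classical Erlang B formula, with offered load $\lambda T$ and $N$ servers, applies verbatim.
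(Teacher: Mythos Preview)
Your proposal is correct and follows essentially the same route as the paper: reduce to an $M/D/N/N$ loss system by observing that throughput optimality makes every admitted user's holding time exactly $T$, then read off the Erlang B formula with offered load $\lambda T$. The paper's proof is terser---it simply asserts the Erlang B value once the $M/D/N/N$ identification is made---whereas you additionally spell out the insensitivity and PASTA justifications that the paper leaves implicit; this extra care is welcome but does not constitute a different approach.
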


\begin{proof}
  The arrival process for $S$ is a Poisson process.  Under a throughput optimal scheme, all
  users immediately begin being serviced upon arrival until a total of $N$ users are in the server
  buffer.  We can view each active user as being serviced by an individual service unit with
  deterministic service time $T$ timeslots.  Hence, the average service rate is $ 1/T$ for each
  server and $S$ is equivalent to an $ M/D/M/M$ queue, where
  $D$ is a deterministic service time.  The blocking probability for $S$ is then
  given by (\ref{eq:server_Pb_infiniteIO}).  
\end{proof}
  In a  system with infinite I/O access bandwidth all drives are always available for read. Then
  there is no need to  store more than one copy of each file chunk. That is, $|\mathcal M^d_{i}|= 1$
  for all  $f_i \in \mathcal F$ suffices for throughput optimality. 
\begin{rem}
Serving the earliest undecoded chunk of the leader is not essential for the optimality
of the algorithm.  Selecting any undecoded chunk by the leader will suffice.   However,
by serving undecoded chunks of the leader in a contiguous way, we promote better in-order delivery
to the application.  
\end{rem}

This verifies the intuitive result that \ac{NCS} does not provide benefit over \ac{UCS} in an
infinite I/O access bandwidth system.  Note that Algorithm \ref{alg:InfLeaderBasedScheme} can be
adjusted  to operate with coded storage via simple
modifications.   

\subsection{Finite I/O access bandwidth systems}
\label{sec:multipleRead}

In this subsection we consider systems with drives that can become busy owing to serving other
requests, i.e., drives with finite I/O access bandwidth for which $ P_{b}^{D} > 0$.  We still assume ideal chunk transport
medium with no erasures and broadcast capabilities to all active users, such as TCP
for multicast variants, Ethernet, or emulated broadcasting systems.

\subsubsection{Uncoded Scheme}\label{sec:schemeFiniteBW}

In the finite I/O case, the concept of leaders needs modification depending on what chunks
are available for access.  We then
distinguish between a \emph{true} leader and a \emph{temporary} leader in our modified scheduling
algorithm.  This is to  handle temporary
unavailability of drives that store undecoded chunks demanded by the true leader.  We modify
Algorithm \ref{alg:InfLeaderBasedScheme} to find  undecoded chunks of the true leader that
are available for read.   If no such undecoded chunk for the true leader is available, then we will
limit our search to the \emph{next} leading user and the undecoded chunks of that user, which by the
approach of the service scheme must have been all decoded by the excluded leader.  We continue until
we can find one user who is leading among the remaining users and for whom one of its undecoded
chunks is available for read.  The modified scheme operates as per Algorithm \ref{alg:modifiedScheme}.  

\begin{algorithm}
\caption{Leader-based scheduling scheme for \ac{UCS} with finite I/O access bandwidth.}
\label{alg:modifiedScheme}
  \begin{algorithmic}[1]
\FOR{timeslot $t$ }
    \STATE Obtain the drive availability vector $\bm{b}(t)$.
          \STATE Create a temporary list of active users, denoted by $\mathcal{U}_t$, and initialize it to all active users in the
      system: $\mathcal{U}_t = \mathcal{U}_A$.
          \STATE Find the leader from the temporary list of active users from (\ref{eq:lmax}).\label{step:leader}
          \STATE Find the leader's set of all undecoded chunks denoted by $\mathcal F^u_\ell \subset
      \mathcal F$. That is, $f_j \in \mathcal F_\ell^u \leftrightarrow a_{\ell,j}(t) = 1$.
          \STATE If there exists at least one available drive for at least one chunk in $\mathcal F^u_\ell$, then
      select one such chunk, denoted by $f_{j^*}$, and go to step \ref{step:serve}. Otherwise, remove the
      leader from temporary active users ($\mathcal{U}_t \leftarrow \mathcal{U}_t \setminus u_\ell$)
      and go to step  \ref{step:leader}.
          \STATE Read the chunk $f_{j^*}$ from one of the available drives in $\mathcal M^d_{j^*}$
          and broadcast $x(t) = f_{j^*}$  to all active users.\label{step:serve}
          \STATE All users decode the chunk $f_{j^*}$ and the server updates their
      decoded chunk list. That is, $a_{m,j^*}(t) = 0$ for all active users in $\mathcal{U}_t$. (Note
      that the excluded leading users have already decoded $f_{j^*}$ and hence at the end of this
      step $a_{m,j^*}(t) = 0$ for all users $u_m \in \mathcal U_A$).
\ENDFOR
        \end{algorithmic}
\end{algorithm}

\subsubsection{Coded Scheme}
The proposed scheme for \ac{NCS} finite I/O bandwidth systems is similar to Algorithm
\ref{alg:modifiedScheme} in terms of finding temporary leaders depending on drive availability. The
main difference with Algorithm \ref{alg:modifiedScheme} is the choice of the chunk for service: The
scheduler needs to keep track  of coded chunks so far received by the users.

For each timeslot $t$, we define a binary \emph{coded chunk reception} vector of size $H$ for user $u_m$, denoted by
$\bm{q}_m(t)$, as follows: $q_{m,r}(t) = 0$ if coded chunk $c_r$ has been so far received by user
$u_m$ and $q_{m,r}(t) = 1$ otherwise.  Algorithm \ref{alg:modifiedCodedScheme} describes the
scheme.  

\begin{algorithm}
\caption{Leader-based scheme for \ac{NCS} with finite I/O access bandwidth.}
\label{alg:modifiedCodedScheme}
  \begin{algorithmic}[1]
\FOR{timeslot $t$ }
    \STATE Obtain the drive availability vector $\bm{b}(t)$.
          \STATE Create a temporary list of active users, $\mathcal{U}_t = \mathcal{U}_A$.
          \STATE Find the leader from the temporary list of active users. \label{step:leader}
          \STATE Find the leader's set of all unreceived coded chunks denoted by $\mathcal
          C^u_\ell$. That is,  $c_r\in \mathcal C_\ell^u \leftrightarrow q_{\ell,r}(t) = 1$.
          \STATE If there exists at least one available coded chunk in $\mathcal C^u_\ell$ for read, then
      select one such chunk, denoted by $c_{r^*}$, and go to step \ref{step:serve}. Otherwise, remove the
      leader from temporary active users ($\mathcal{U}_t \leftarrow \mathcal{U}_t \setminus u_\ell$)
      and go to  step \ref{step:leader}.
          \STATE Read the chunk $c_{r^*}$ from its corresponding drive and broadcast $x(t) =
          c_{r^*}$ to all active users. \label{step:serve}
          \STATE Update $q_{m,r^*}(t) = 0$ for all active users $u_m \in \mathcal U_A$.
\ENDFOR
        \end{algorithmic}
\end{algorithm}

\subsubsection{Schemes Analysis and Comparison}

\begin{figure}[tb]
  \centering
  \includegraphics[width=\linewidth]{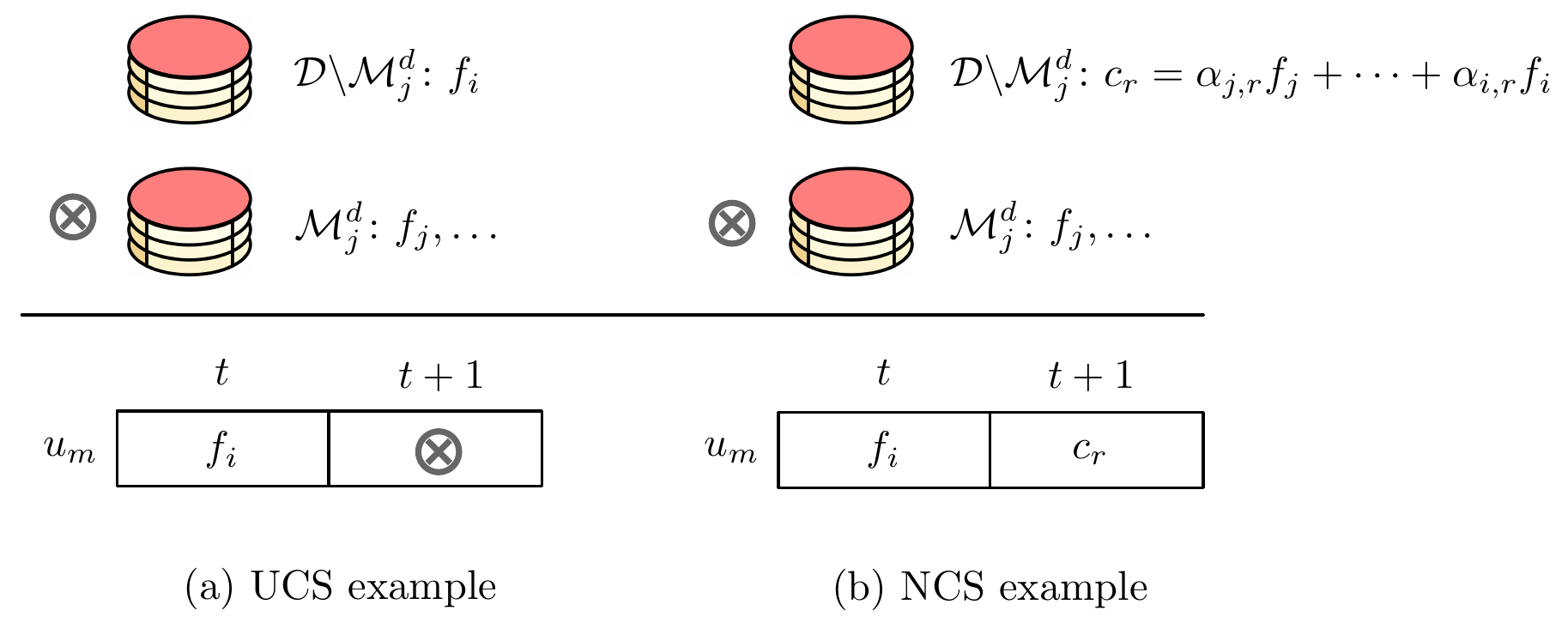}
  \caption{Given an instance of \ac{UCS}, an example construction \ac{NCS} can improve throughput compared to a \ac{UCS}. }
  \label{fig:precodingToyExample}\vspace{-2mm}
\end{figure}

In a finite I/O storage system neither \ac{UCS} nor \ac{NCS} can guarantee throughput optimality,
since we can always find a drive unavailability pattern with non-zero probability of occurring that
would block at least one user (for instance, consider the simple case when all drives are blocked
in the same timeslot).  We now show simple proofs showing that while both uncoded schedule of Algorithm
\ref{alg:modifiedScheme} and coded schedule of Algorithm \ref{alg:modifiedCodedScheme} achieve
maximal throughput across their respective data storage formats, that the number of targeted users
using \ac{NCS} with maximal throughput scheduling is at least as high as that in the \ac{UCS} system.

\begin{lem}
\label{lem: MaximalThroughput}
Algorithms \ref{alg:modifiedScheme} and \ref{alg:modifiedCodedScheme} achieve maximal throughput
across their respective data storage formats.
\end{lem}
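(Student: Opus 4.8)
The plan is to read ``maximal throughput'' as a per-timeslot greedy optimality (consistent with the remark after Definition \ref{def:optimal}): I will show that, given the current knowledge state of the active users and the drive availability vector $\bm b(t)$ fixed at the start of the slot, each algorithm broadcasts a chunk that targets the largest number of active users achievable by \emph{any} scheme in that slot (broadcasting nothing being allowed if no drive is available). The whole argument hinges on a structural invariant: at every timeslot the \emph{outstanding-chunk sets} of the active users form a chain under set inclusion. For Algorithm \ref{alg:modifiedScheme} define, extending the paper's $\mathcal F^u_\ell$ to every user, $\mathcal F^u_m(t)=\{f_j: a_{m,j}(t)=1\}$; for Algorithm \ref{alg:modifiedCodedScheme} define $\mathcal C^u_m(t)=\{c_r: q_{m,r}(t)=1\}$. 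I claim the family $\{\mathcal F^u_m(t)\}_{u_m\in\mathcal U_A}$ (respectively $\{\mathcal C^u_m(t)\}_{u_m\in\mathcal U_A}$) is totally ordered by inclusion for all $t$, and that, by the relation between rank and outstanding-set size, the leader (\ref{eq:lmax}) is then a minimum element of the chain.

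First I would establish the invariant by induction over system events (arrivals, departures, broadcasts). An arrival adds the set $\mathcal F$ (respectively $\{c_1,\dots,c_H\}$), a superset of every existing outstanding set, so the chain property survives; a departure deletes one set, and a subfamily of a chain is a chain. The only substantive case is a broadcast at step~\ref{step:serve}. Let $u_{\ell_1},\dots,u_{\ell_k}$ be the leaders excluded during the search and $x$ the chunk finally served, taken from the outstanding set of the current (temporary) leader. Because $\bm b(t)$ is frozen for the slot, each excluded $u_{\ell_i}$ had \emph{no} available chunk in its outstanding set, so the available chunk $x$ cannot lie in $\mathcal F^u_{\ell_i}$ (resp.\ $\mathcal C^u_{\ell_i}$): the excluded sets are unchanged by the broadcast. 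Among the remaining users the temporary leader's outstanding set is the minimum of the induced subchain, so $x$ belongs to every remaining user's outstanding set and is deleted from each of them. Removing one common element from the ``upper'' part of a chain while leaving the ``lower'' (excluded) sets -- which do not contain that element -- untouched again yields a chain; this is the routine but slightly fiddly bookkeeping step.

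Granting the invariant, the per-slot optimality is short. A broadcast of $x$ raises the rank of exactly those active users whose outstanding set contains $x$: in the uncoded case because each user's knowledge space is the span of the unit vectors $\mathbf e_j$ of its decoded chunks, so an undecoded $f_j$ is always innovative; in the coded case because, by the MDS property, any coded chunk not yet received by an active user (whose rank, hence number of received chunks, is below $T$) is innovative. Now suppose the leader search excludes $k$ users and serves $x$ from the next leader. By the chain restricted to the $|\mathcal U_A|-k$ non-excluded users, $x$ lies in all their outstanding sets, so the algorithm targets exactly $|\mathcal U_A|-k$ users. Conversely, every chunk on an available drive is -- by the very criterion that excluded $u_{\ell_1},\dots,u_{\ell_k}$ -- absent from all $k$ excluded users' outstanding sets, hence targets at most $|\mathcal U_A|-k$ users; so no scheme beats the algorithm in this slot. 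When $k=0$ the slot is in fact throughput optimal in the sense of Definition \ref{def:optimal}. This holds for each algorithm against schemes using its own storage format, which is exactly the claim.

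The main obstacle is the invariant itself: without it the greedy leader choice need not be globally optimal in a slot, since a chunk that no current leader still needs could be missing at several lower-rank users and thus target more of them -- the chain property is precisely what forbids this configuration, and it is maintained \emph{because} the algorithm only ever serves leader chunks, starting from the trivially-chained empty queue. I would also note that the argument never uses \emph{which} available leader chunk is picked (Algorithms \ref{alg:modifiedScheme}--\ref{alg:modifiedCodedScheme} only require ``one such chunk''), that it specializes, with $P_b^D=0$ forcing $k=0$ in every slot, to the throughput optimality already shown for Algorithm \ref{alg:InfLeaderBasedScheme}, and that the coded bookkeeping relies on no coefficient cycling beyond the stated MDS guarantee.
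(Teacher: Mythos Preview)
Your proof is correct and follows essentially the same approach as the paper's: the paper's one-line argument rests on the very nesting property you call the chain invariant (phrased there as ``no chunks exist that non-leader users have decoded (received) but the leader has not''), and from it draws the same conclusion that the temporary leader's chunk serves all non-excluded users while the excluded users are unreachable by any available chunk. You have simply made the invariant explicit and proved it by induction over arrivals, departures, and broadcasts, and spelled out the matching upper bound $|\mathcal U_A|-k$, whereas the paper asserts the nesting without proof.
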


\begin{proof}   
The scheme of Algorithm \ref{alg:modifiedScheme} (Algorithm \ref{alg:modifiedCodedScheme}) identifies a leading user with maximum rank with
available file chunk(s) for read.  All other users with
smaller or equal ranks will also receive a \ac{d.o.f.} since no chunks exist that non-leader users
have decoded (received) but the leader has not.  Consequently, at any given time, the number of
serviced users with a \ac{d.o.f.} is maximized subject to instantaneous drive availability given the storage format. Therefore, both algorithms achieve maximal throughput. 
\end{proof}
 
 \begin{lem}
\label{lem: PrecodingHelps}
The number of targeted users $|\mathcal U_n(t)|$ in a finite I/O storage system using \ac{NCS}
with maximal throughput scheduling can be at least as high as that in a \ac{UCS}
system with maximal throughput scheduling.
\end{lem}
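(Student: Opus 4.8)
The plan is to compare the uncoded and coded systems on a common footing and to exhibit a \ac{NCS} storage allocation for which the coded scheme targets all active users in every situation where the uncoded scheme does. Concretely, I will fix an arbitrary drive-availability pattern $\bm{b}(t)$ and an arbitrary profile of knowledge-space ranks $r_1(t),\dots,r_{|\mathcal U_A|}(t)$ over the active users, run Algorithm~\ref{alg:modifiedScheme} on a \ac{UCS} allocation $\{\mathcal M^d_i\}_{i=1}^{R}$ and Algorithm~\ref{alg:modifiedCodedScheme} on a to-be-specified \ac{NCS} allocation over the \emph{same} $R$ drives with the \emph{same} $\bm{b}(t)$, and argue that $|\mathcal U_n(t)|$ under \ac{NCS} is at least $|\mathcal U_n(t)|$ under \ac{UCS}. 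I will first observe that this comparison is well posed: since Algorithm~\ref{alg:modifiedCodedScheme} differs from Algorithm~\ref{alg:modifiedScheme} only in which chunk is served, a state realizing a given rank profile is reachable under one algorithm exactly when it is under the other (for instance the state in which each $u_m$ has received the first $r_m(t)$ chunks of its format).

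Next I will reduce the claim to a single implication. From the proof of Lemma~\ref{lem: MaximalThroughput}, under either algorithm, once a (possibly temporary) leader with an available chunk is identified, \emph{every} active user gains a \ac{d.o.f.}, since no non-leader user holds a chunk that the served leader is missing; and if no such leader exists, nothing is transmitted. Hence for each scheme $|\mathcal U_n(t)|$ equals either $|\mathcal U_A|$ or $0$, so it suffices to show that the \ac{NCS} scheme targets all active users whenever the \ac{UCS} scheme does.

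I will then specify the \ac{NCS} construction so that every drive stores at least $T$ distinct coded chunks: for example, take $H=TR$ coded chunks from a random linear MDS code over a large field $\mathbb F_q$ and place a disjoint block of $T$ of them on each drive, so that, with high probability, any $T$ of the $H$ chunks are linearly independent. The crux is the observation that in this system every available drive carries an innovative chunk for the true leader $u_\ell$: since $u_\ell$ is active it has received $r_\ell(t)\le T-1$ distinct coded chunks, while an available drive holds $T$ distinct coded chunks, so at least $T-r_\ell(t)\ge 1$ of them are unreceived by $u_\ell$, and reading one raises $r_\ell(t)$ by one by the MDS property. Therefore $|\mathcal U_n(t)|=|\mathcal U_A|$ in the \ac{NCS} system as soon as a single drive is available in $\bm{b}(t)$. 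To close the argument I will note that if the \ac{UCS} scheme achieves $|\mathcal U_A|$, then by Algorithm~\ref{alg:modifiedScheme} some chunk demanded by a leader lies on an available drive, so at least one drive is available, and hence the \ac{NCS} scheme also achieves $|\mathcal U_A|$; if the \ac{UCS} scheme achieves $0$ the inequality is immediate. That the gap can be strict is then illustrated by the instance in Fig.~\ref{fig:precodingToyExample}, where every available drive lacks the uncoded chunk needed by any leader yet a coded read still serves everyone.

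I expect the main obstacle to be conceptual rather than computational: making precise the sense in which the uncoded and coded systems form ``the same instance'' (same drives, same availability pattern, matched rank profile, comparable reachable states), so that comparing $|\mathcal U_n(t)|$ is meaningful, and carefully invoking the structural fact from Lemma~\ref{lem: MaximalThroughput} that a non-leader never holds a chunk the served leader lacks --- including when that leader is only a temporary leader. Once this scaffolding is in place, the counting step and the final case split are elementary.
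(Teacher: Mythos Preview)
Your reduction rests on a claim that is false for \ac{UCS}: you assert that ``for each scheme $|\mathcal U_n(t)|$ equals either $|\mathcal U_A|$ or $0$.'' In Algorithm~\ref{alg:modifiedScheme}, when the true leader is removed because none of its undecoded chunks is available and a \emph{temporary} leader is served instead, the excluded leader(s) do \emph{not} gain a \ac{d.o.f.}; Step~8 of Algorithm~\ref{alg:modifiedScheme} says this explicitly (``the excluded leading users have already decoded $f_{j^*}$''). So $|\mathcal U_n(t)|$ in \ac{UCS} can sit strictly between $0$ and $|\mathcal U_A|$, and your two-case split (``\ac{UCS} achieves $|\mathcal U_A|$'' vs.\ ``\ac{UCS} achieves $0$'') misses exactly the cases where the comparison is interesting. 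The fix is easy with your construction: since every drive carries $T$ coded chunks, the \emph{true} leader in \ac{NCS} is always serviceable whenever at least one drive is up, so $|\mathcal U_n(t)|=|\mathcal U_A|$ in \ac{NCS} as soon as $\bm b(t)\neq \bm 1$; and if \ac{UCS} achieves any $k\ge 1$, some drive is up. That covers the missing intermediate cases and completes the inequality without the dichotomy.

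Even with this repair, your route differs from the paper's. You manufacture an \ac{NCS} instance that loads each drive with $T$ MDS-coded chunks ($H=TR$), which in general uses strictly more storage than the given \ac{UCS} layout; this buys you the clean ``any available drive serves the true leader'' step, and hence the trivial domination. The paper instead builds a storage-matched \ac{NCS} instance: on each drive $D_i$ it stores coded chunks that are linear combinations of exactly the uncoded chunks that $D_i$ held in the \ac{UCS} instance. With identical per-drive footprints, any read from $D_i$ that targets $|\mathcal U_n(t)|$ users in \ac{UCS} is matched by a read from $D_i$ in \ac{NCS} that provides a new \ac{d.o.f.} to at least those users. Your argument is shorter but leans on extra storage; the paper's is a fairer like-for-like comparison. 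If you want to keep your construction, you should either justify why an unrestricted $H$ is within scope of the lemma, or rework it to respect the per-drive chunk counts of the \ac{UCS} instance.
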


\begin{proof}
Given any instance of \ac{UCS}, we need to show (1) that no drive blocking patterns exist where the number of targeted users $|\mathcal U_n(t)|$ is higher than that in all instances of \ac{NCS}, and (2) that there exist drive blocking patterns for which $|\mathcal U_n(t)|$ in \ac{NCS} 
is higher than that in the \ac{UCS} instance.  

For (1), consider an instance of \ac{NCS} which is constructed as follows.  Each coded chunk $ c_{r}$ stored on $ D_i$ is a linear combination of the
uncoded chunks stored on $D_{i}$ in the \ac{UCS} instance.  Under this
scenario, given linear independence from earlier chunks, if any read from $D_i$ in the \ac{UCS} instance can target $|\mathcal U_n(t)|$ users, it is clear that a read from $D_i$ in the \ac{NCS} counterpart can also provide a new \ac{d.o.f.} to at least the same number of users.     
For (2), we proceed by counterexample.  We can always consider a single active user $u_m$ with $r_{m}(t) = T-1$ under the \ac{UCS}
instance.  See Fig.~\ref{fig:precodingToyExample} for a toy-example, when $r_{1}(2) = 1$ and the
only missing chunk of user $u_1$ is $f_j$.  Assume that all drives in
$\mathcal M_{j}^{d}$ are
blocked during timeslot $ t+1$.  For the \ac{UCS} system, $u_m$ cannot be targeted so $\vert
\mathcal U_{n}(t_+1) \vert =0$.  However, in the \ac{NCS} instance of the system, although $\mathcal M_{j}^{d}$ is
blocked, any unseen coded chunks with $ \alpha_{j,r} \neq 0$ stored on drives in $ \mathcal D
\backslash \mathcal M_j^d$ can still provide a new \ac{d.o.f.} to user $u_m$, so $\vert \mathcal U_{n}(t+1) \vert =1$.
\end{proof}

To further illustrate \ac{NCS}  improved blocking performance, we now focus on the internal true
leader's blocking probability.  First, consider the
following restricted \ac{UCS} file layout with replication and striping.  There are a total of $R = WT$ drives in the system. The uncoded system stores a single file chunk per drive, where drive $D_{(w-1)T+i}$ stores the $w$th copy of the $i$th file
chunk for $w = 1, \cdots, W$ and $i = 1, \cdots, T$. The coded system stores $H = R = WT$ coded file
chunks such as $c_r$, one on each drive $D_r$. Assuming that each drive becomes unavailable with
probability $P_b^D$ independently of other drives and previous timeslots, the following lemma gives
the leader internal blocking probability  in each system.

\begin{lem}
\label{lem:BlockingCompareFinite}
At timeslot $t$, the internal blocking probability of the true leader who has a knowledge space rank of $r_\ell(t)$ is given by
\begin{equation}\label{eq:block:coded:single}
P_{b}^c = (P_b^D)^{WT-r_\ell(t)}
\end{equation}
in an \ac{NCS} system and by
\begin{equation}\label{eq:block:uncoded:single}
P_{b}^u = (P_b^D)^{WT-Wr_\ell(t)}
\end{equation}
in a \ac{UCS} system.
\end{lem}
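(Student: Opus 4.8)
The plan is to turn both formulas into a counting statement. For the true leader with knowledge-space rank $r_\ell(t)$ I would identify exactly the set of drives whose stored chunk is innovative to that leader, and then use the assumption that drives block independently, each with probability $P_b^D$, to conclude that the true leader is internally blocked precisely when every drive in that set is busy. To make this precise, first note that by the structure of Algorithms~\ref{alg:modifiedScheme} and~\ref{alg:modifiedCodedScheme}, on the first pass through the loop the temporary leader \emph{is} the true leader, so if any drive storing a true-leader-innovative chunk is available it is read and broadcast and the true leader gains a \ac{d.o.f.}; conversely, if no such drive is available the true leader is skipped and hence internally blocked. So the true leader is internally blocked at $t$ if and only if every drive holding a leader-innovative chunk is busy, and it only remains to count those drives in each storage format and raise $P_b^D$ to that power.

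For the \ac{NCS} system I would invoke the MDS property: any $T$ of the $H = WT$ coded chunks, and hence any sub-collection of size at most $T$, are linearly independent. Since the true leader is active, $r_\ell(t) < T$, so the coded chunks it has received are linearly independent and span its whole knowledge space; thus it has received exactly $r_\ell(t)$ distinct coded chunks (Algorithm~\ref{alg:modifiedCodedScheme} never serves the leader an already-received chunk, and the leader, having maximal rank, is not a freshly arrived rank-$0$ user, except in the trivial empty-queue case where $r_\ell(t)=0$ and the claim is immediate). Each of the remaining $WT - r_\ell(t)$ coded chunks, appended to the leader's received chunks, produces $r_\ell(t)+1\le T$ linearly independent vectors and is therefore innovative; moreover these chunks sit on $WT - r_\ell(t)$ distinct drives, one chunk per drive by construction. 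Independence of drive blocking then gives $P_b^c = (P_b^D)^{WT - r_\ell(t)}$, which is \eqref{eq:block:coded:single}.

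For the \ac{UCS} system, the knowledge space of any user is spanned by the unit vectors $\mathbf{e}_j$ of the file chunks it has decoded, so $r_\ell(t)$ simply counts the file chunks the true leader has already decoded; it still needs each of the $T - r_\ell(t)$ undecoded chunks, and a drive is leader-innovative iff it stores one of them. In the replication-and-striping layout each file chunk has exactly $W$ copies, on the $W$ distinct drives $D_{(w-1)T+i}$, $w=1,\dots,W$, and distinct chunks occupy distinct drives; hence the set of leader-innovative drives has cardinality $W(T - r_\ell(t)) = WT - W r_\ell(t)$. Independence then yields $P_b^u = (P_b^D)^{WT - W r_\ell(t)}$, which is \eqref{eq:block:uncoded:single}.

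The step I expect to be the main obstacle is the bookkeeping in the \ac{NCS} case: proving that the leader's rank equals the number of distinct coded chunks it has received, i.e.\ that no received chunk was redundant and none was received twice. This rests on the MDS assumption together with the facts that Algorithm~\ref{alg:modifiedCodedScheme} only ever serves the leader a not-yet-received chunk and that the leader is never a rank-$0$ newcomer. Once that is settled, and once one checks that the relevant drive sets are genuinely disjoint so the blocking exponents are plain cardinalities, both identities follow immediately from the independence of the $b_i(t)$.
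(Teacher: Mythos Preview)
Your argument is correct and follows essentially the same route as the paper: identify the set of drives that are still useful to the true leader, count them ($WT-r_\ell(t)$ in \ac{NCS}, $W(T-r_\ell(t))$ in \ac{UCS}), and then use independence of drive blocking to obtain the stated exponents. The paper's own proof is a two-line version of exactly this counting, simply asserting the number of remaining useful drives in each case without the additional bookkeeping you supply (the MDS argument that every unreceived coded chunk is innovative, and the check that rank equals number of distinct received chunks for the leader); your extra care is sound but not a different method.
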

\begin{proof}
If the leader has received $r_\ell(t)$ coded chunks up to time $t$, there remain only
$WT-r_\ell(t)$ useful drives for service and \eqref{eq:block:coded:single} follows. In the uncoded
system, if the leader has decoded $r_\ell(t)$ file chunks up to time
$t$, there remain only $WT-Wr_\ell(t)$ useful drives for service and \eqref{eq:block:uncoded:single}
follows. 
\end{proof}
For large $r_\ell(t)$ or $W$,  the improvement in leader blocking probability enabled by
coded storage can  become significant.   Next we consider regular $s$-striped storage systems \cite{FerMedSol:12} with a total of $R = Ws$ drives and $T/s$ file chunks in each stripe set which is
an integer. The following lemma gives the leader blocking
probability in uncoded and coded  systems.

\begin{lem}
\label{lem:BlockingCompareFinite2}
Assume that at timeslot $t$, the leader in the uncoded system has completely decoded $r$ out of $s$
stripe sets, where $r = 0, \cdots, s-1$, such that its knowledge space rank satisfies $rT/s\le
r_\ell(t) < (r+1)T/s$. Then, its  internal blocking  probability is given by
\begin{equation}\label{eq:block:uncoded:stripe}
P_{b}^u = (P_b^D)^{Ws-Wr}
\end{equation}
Now assume that in the coded system, the leader's knowledge space rank is also $r_\ell(t)$.  A
simple upper bound for the internal blocking probability $ P_{b,ub}^{c}$ is given by
\begin{equation}\label{eq:block:coded:stripe:worst}
(P_b^D)^{Ws-r} \le P_{b,ub}^{c} = (P_b^D)^{Ws-\lfloor \frac{r_\ell(t)}{T/s}\rfloor} < (P_b^D)^{Ws-(r+1)}
\end{equation}
And a simple lower bound $ P_{b,lb}^{c}$ is given by
\begin{equation}\label{eq:block:coded:stripe:best}
P_{b,lb}^{c} = (P_b^D)^{Ws-\max(0,r_\ell(t) - WT-Ws)}
\end{equation}
which will deviate from the best possible blocking probability of $(P_b^D)^{Ws}$ only when $W =1$ and $ T-s < r_\ell(t) < T$.
\end{lem}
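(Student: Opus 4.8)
The plan is to reduce both parts of the lemma to one counting problem. Call a drive \emph{useful} to the leader at timeslot $t$ if it stores at least one chunk whose reception would raise $r_\ell(t)$. Since the schemes of Algorithms~\ref{alg:modifiedScheme} and \ref{alg:modifiedCodedScheme} serve the true leader whenever one of its useful drives is available, and drives are blocked independently with probability $P_b^D$, the true leader is internally blocked exactly when \emph{all} of its useful drives are blocked. Hence the internal blocking probability is always $(P_b^D)^{u}$, where $u$ is the number of useful drives, and the entire proof amounts to computing or bounding $u$ in the two storage formats.

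For the uncoded layout I would argue directly. Each of the $Ws$ drives holds a full replica of one of the $s$ stripe sets; by hypothesis the leader has completely decoded $r$ of them. A replica of one of those $r$ decoded stripe sets has all its chunks decoded and is useless, whereas a replica of any of the remaining $s-r$ (incomplete) stripe sets still carries an undecoded file chunk, which is always innovative, so it is useful. This gives $u=W(s-r)=Ws-Wr$ and hence \eqref{eq:block:uncoded:stripe}, independently of how the leftover rank $r_\ell(t)-rT/s$ is spread among the incomplete stripe sets.

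The coded layout is the substantive case. Under the $s$-striped coded layout---where each of the $Ws$ drives holds $T/s$ coded chunks from a global MDS code, so that any $T$ of the $WT$ coded chunks are linearly independent---the first step is the consequence that, because an active leader has $r_\ell(t)<T$, \emph{any} coded chunk the leader has not yet received is innovative to it (the received set together with the new chunk has size $\le T$ and is therefore independent). So the leader's received set has size exactly $r_\ell(t)$, and a drive is useful if and only if the leader has not yet received all $T/s$ of its chunks. Writing $d_j\in\{0,\dots,T/s\}$ for the number of the leader's received chunks lying on drive $j$ (so $\sum_j d_j=r_\ell(t)$) and $a=|\{j:d_j=T/s\}|$ for the number of \emph{saturated} drives, we get $u=Ws-a$, so the internal blocking probability is $(P_b^D)^{Ws-a}$.

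What remains, and what I expect to be the delicate part (besides correctly pinning down the coded storage model), is bounding $a$. The saturated drives alone account for $a\,T/s$ received chunks, so $a\,T/s\le r_\ell(t)$, giving $a\le\lfloor r_\ell(t)/(T/s)\rfloor$, which under the hypothesis equals $r$; this yields the upper bound $(P_b^D)^{Ws-r}$, and the sandwich $(P_b^D)^{Ws-r}\le(P_b^D)^{Ws-\lfloor r_\ell(t)/(T/s)\rfloor}<(P_b^D)^{Ws-(r+1)}$ of \eqref{eq:block:coded:stripe:worst} is then immediate from $0<P_b^D<1$. For the lower bound I would bound $a$ from below: the $Ws-a$ unsaturated drives carry at most $T/s-1$ received chunks each, so $r_\ell(t)\le a\,T/s+(Ws-a)(T/s-1)=WT-Ws+a$, whence $a\ge\max\bigl(0,\,r_\ell(t)-W(T-s)\bigr)$; this gives the lower bound $(P_b^D)^{Ws-\max(0,\,r_\ell(t)-W(T-s))}$, matching \eqref{eq:block:coded:stripe:best}. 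Finally, since $r_\ell(t)<T$, the correction $r_\ell(t)-W(T-s)$ can be positive only if $W(T-s)<T$, which for the striping factors of interest forces $W=1$ and $T-s<r_\ell(t)<T$, so outside that corner the lower bound equals the best possible value $(P_b^D)^{Ws}$. One can also check that both extreme values of $a$ are attained for suitable drive-availability histories, so the two bounds are in fact tight, although only the bounds themselves are needed here.
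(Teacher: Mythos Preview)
Your proposal is correct and follows essentially the same approach as the paper: both arguments reduce the question to counting drives that still hold an innovative chunk for the leader and then bound, in the coded case, the number of drives that have been fully consumed. The paper phrases this as a worst-case/best-case read pattern (all $r_\ell(t)$ prior reads concentrated on as few drives as possible versus spread uniformly), whereas you formalize the same count via the occupancy variables $d_j$ and the saturation count $a$ together with the two pigeonhole inequalities $a\,T/s\le r_\ell(t)$ and $r_\ell(t)\le WT-Ws+a$; the resulting bounds coincide, and your explicit MDS step (any unreceived coded chunk is innovative since $r_\ell(t)<T$) makes precise an assumption the paper uses tacitly.
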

\begin{proof}
In the uncoded system, if the leader has completely decoded
$r$ stripe sets up to time $t$, there only  remains $Ws-Wr$ useful drives for service and
\eqref{eq:block:uncoded:stripe}  follows. 

The worst case for the coded system occurs when during $r_\ell(t)$  previous services of the leader,
$\lfloor \frac{r_\ell(t)}{T/s}\rfloor$ out of $Ws$ available drives were completely read off and
hence are unavailable for further service,  in which case \eqref{eq:block:coded:stripe:worst}
follows. The bounds  are derived by using the inequalities $rT/s\le r_\ell(t) < (r+1)T/s$. 

The best case for the coded system occurs where all previous $r_\ell(t)$ services of the
leader were uniformly read across $Ws$ available drives. Therefore, one can verify that until the
leader's rank reaches $r_\ell(t) = Ws(T/s-1)+1 = WT-Ws+1$, none of the drives are completely read off
and are all available for service. Hence, we get, $P_{b,ub}^{c} = (P_b^D)^{Ws}$ for $r_\ell(t) <
WT-Ws+1$. After this point, the drives become sequentially unavailable and
$(P_b^D)^{Ws-\max(0,r_\ell(t) - WT-Ws)}$ follows. One can easily verify the last statement of the
lemma using $r_\ell(t) < T$, the assumption that $T/s$ is an integer and $s \le T/2$.  
\end{proof}


\begin{rem}
Lemma \ref{lem:BlockingCompareFinite2} demonstrates the importance of drive selection in Algorithms
\ref{alg:modifiedScheme} and \ref{alg:modifiedCodedScheme}, when more than one drive containing undecoded file chunks of the leader is
available for read.  
One can think about this as memory
in the system: Drive service units cease being helpful if all their content has been read.  When comparing different
variations of Algorithms \ref{alg:modifiedScheme} and \ref{alg:modifiedCodedScheme}, we expect that those which temporally spread reads
across drives to have better average throughput. 

\end{rem}

\section{Numerical Results}
\label{sec:numResults}

Using typical values found in various modern systems, we present Monte
Carlo simulation results comparing the performance of the proposed leader-based scheduling scheme
for \ac{UCS} and \ac{NCS} systems.  By using \eqref{eq:server_Pb_infiniteIO}, the analytical results for the
blocking probability of the proposed leader-based scheduling scheme for uncoded/coded storage with
infinite I/O access bandwidth are presented. For all simulations, we use a regular striped mapping
of file chunks onto  drives.

Fig.~\ref{fig:pbs_vs_N} illustrates the external blocking probability of the proposed scheduling scheme
for both uncoded and coded storage under drives' infinite and finite I/O access bandwidth conditions
versus server buffer size, $N$.  We see that when drives have  finite I/O access bandwidth, \ac{NCS} reduces system blocking probability over
\ac{UCS} and
that the gap tends to grow with increasing buffer size.

Fig.~\ref{fig:eta_pbs_vs_pbd} shows the average throughput and external blocking probability of the
proposed schemes for various drive internal blocking probabilities, $P_b^D$. As shown, 
throughput and external blocking probability are improved in \ac{NCS} compared with \ac{UCS} as
drives become more overwhelmed. In addition,  we see that a small 3\% improvement in throughput
renders a comparatively large improvement of 150\% in external blocking probability.

The internal blocking probability of the true leader versus its knowledge space rank in the uncoded
and coded storage for various drive internal blocking probabilities, $P_b^D$, is presented in
Fig.~\ref{fig:true_leader}. Here, the internal blocking probability of the true leader is
much lower in the coded system compared with the uncoded system as the leader's knowledge space rank
increases. This lower blocking probability is one factor explaining lower external blocking probability of the coded system compared to the uncoded system.
\begin{figure}[!t]
\centering
\hspace{-0.6cm}
\includegraphics[width=0.9\linewidth]{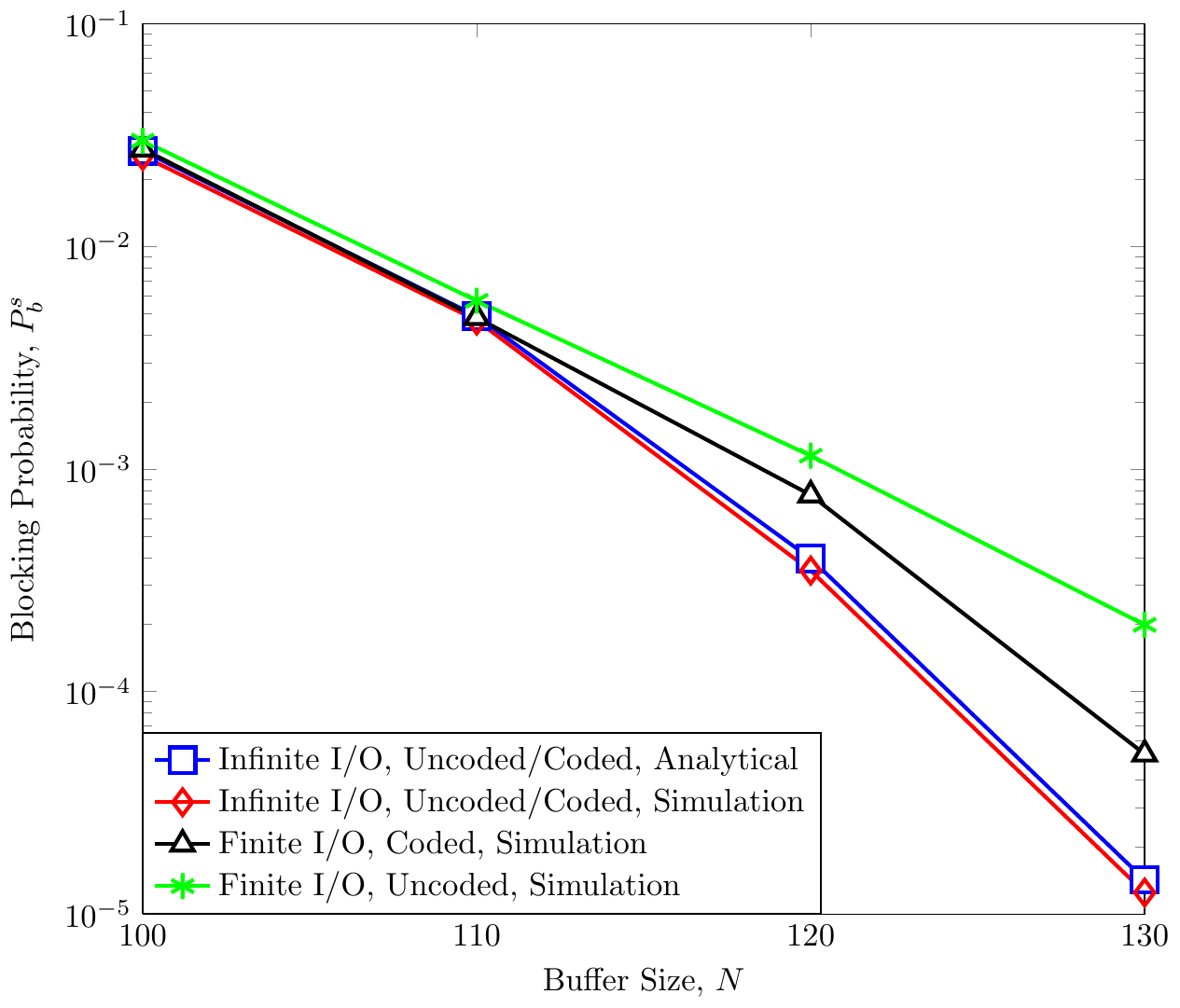}
\vspace{-0.2cm}
\caption{Blocking probability versus server buffer size $N$ for $\lambda= 0.9, T= 100, W= 2, R= 8, s= 4, P_b^D= 0.5$.}
\label{fig:pbs_vs_N}
\vspace{-0.2cm}\end{figure}

\begin{figure}[!t]
\centering
\hspace{0.1cm}
\includegraphics[width=0.95\linewidth]{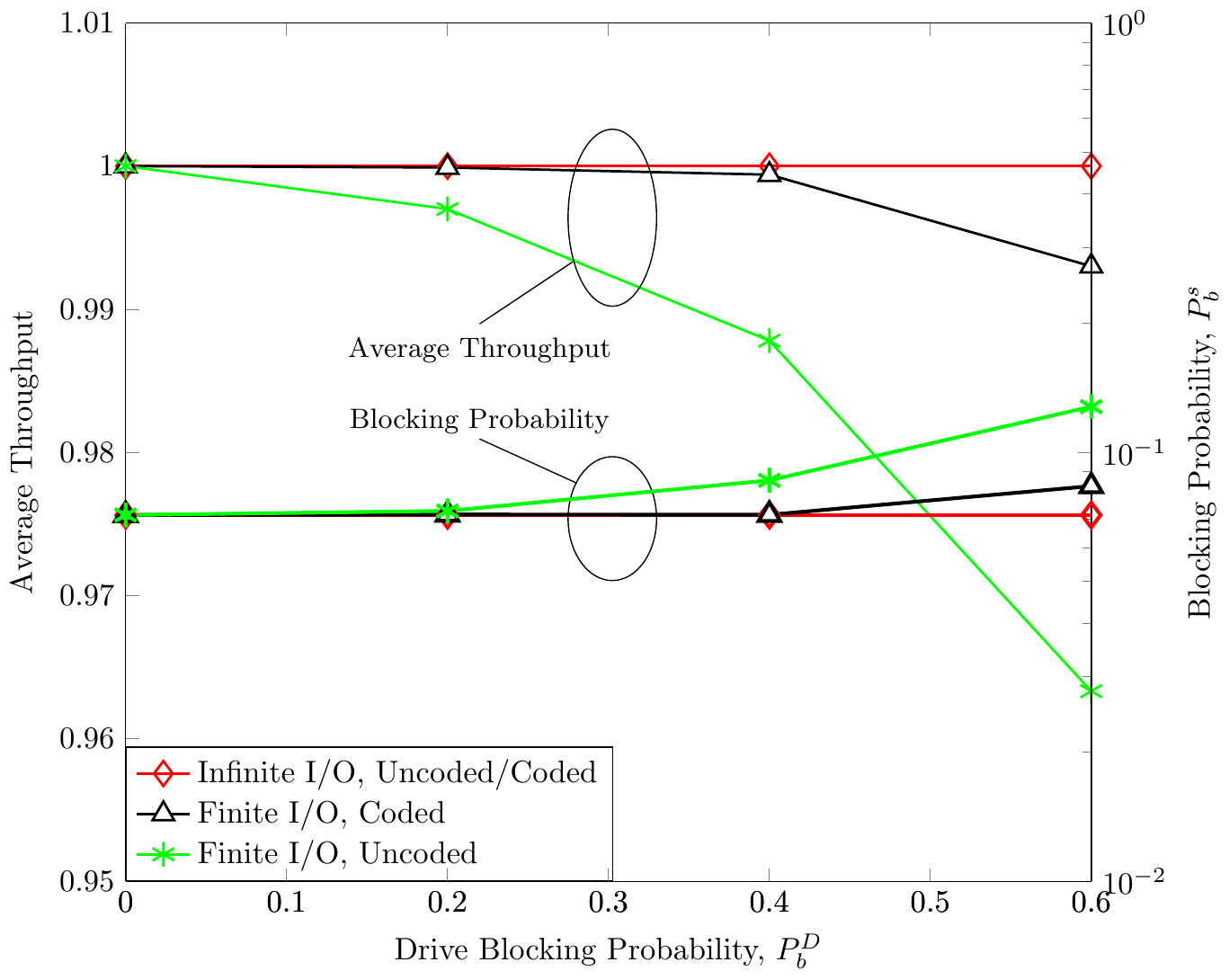}
\vspace{-0.2cm}
\caption{Average throughput and blocking probability versus $P_b^D$ for $\lambda=0.9, T= 8, W= 2, R= 8, s= 4$.}
\label{fig:eta_pbs_vs_pbd}
\vspace{-0.2cm}\end{figure}
\begin{figure}[!t]
\centering
\hspace{-0.6cm}
\includegraphics[width=0.9\linewidth]{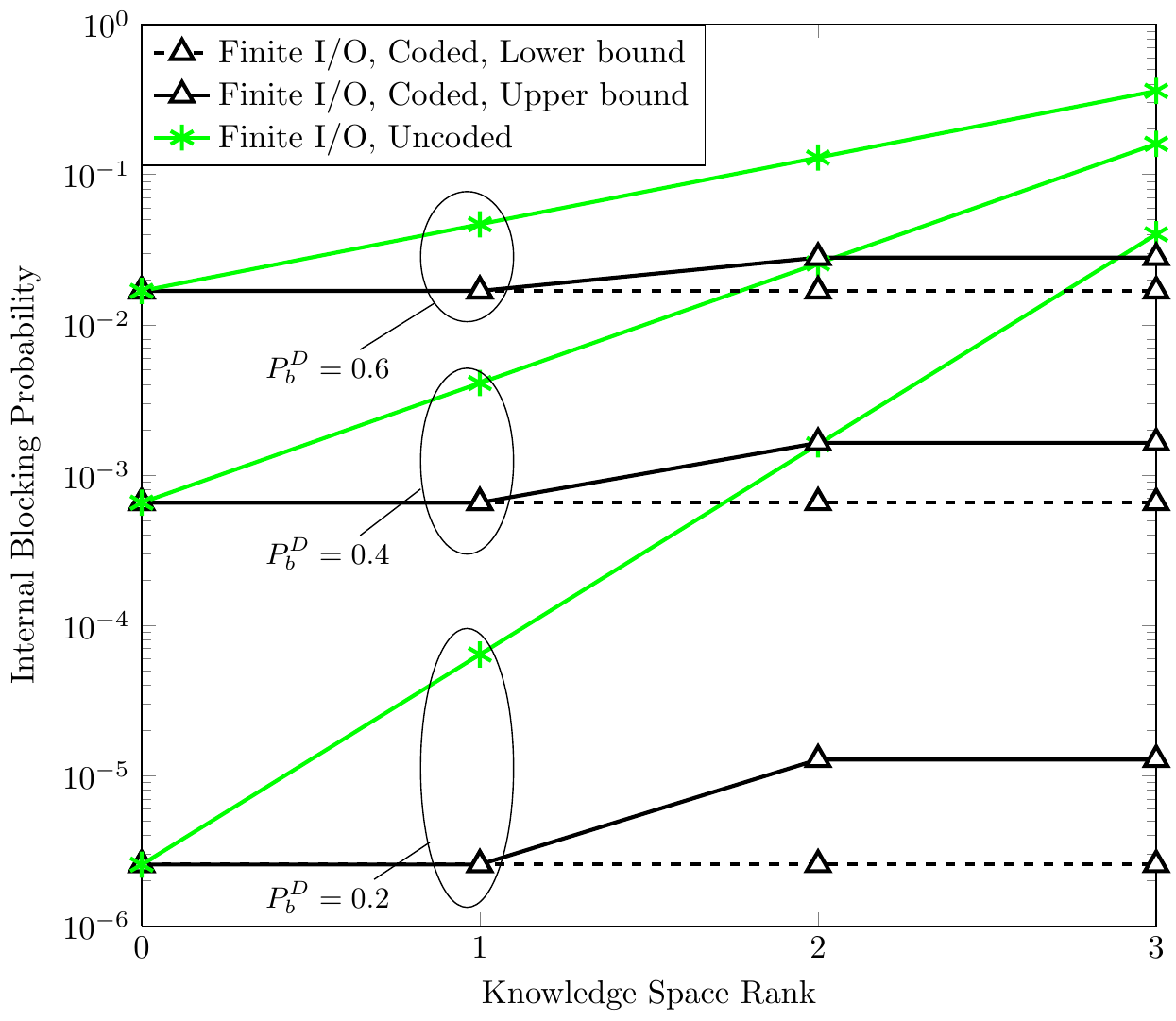}
\vspace{-0.2cm}
\caption{Blocking probability of the true leader versus drive blocking probability $P_b^D$ for $T= 8, W= 2, s= 4$.}
\label{fig:true_leader}
\vspace{-0.2cm}\end{figure}

\section{Conclusions}
\label{sec:conc}
In this paper, we introduced a novel and simple storage model for point-to-multipoint SANs and
investigated the impact of scheduling and content storage format on system blocking probability
and throughput in \ac{PMP} networks.  We proposed two intuitive drive access scheduling techniques for both
\ac{UCS} and \ac{NCS} systems, under infinite and finite I/O access bandwidth conditions.  In
finite I/O access networks, we
showed that \ac{NCS} scheduling flexibility
improves blocking probability and throughput  over \ac{UCS}.  Our
numerical evaluations and simulation results verify these advantages and can be used to guide future storage system design.



\end{document}